\newtheorem{definition}{Definition}
\newtheorem{proposition}{Proposition}
\newtheorem{proof}{Proof}
\newtheorem{theorem}{Theorem}
\newtheorem{remark}{Remark}
\newtheorem{lemma}{Lemma}
\title{Chaos in DNA inversions\\(Draft paper)}
\author{Jacques M. Bahi \and Nathalie C\^ot\'e \and Christophe Guyeux }
\begin{document}

\maketitle

\begin{abstract}
In this paper, reasons explaining why the CM model of Bahi and Michel (2008) simulates with a good accuracy genes mutations over time are proposed.
It is firstly justified that the CM model is a chaotic one, as it is defined by Devaney.
Then, it is established that inversions occurring in genes mutations have indeed a chaotic dynamic, thus making relevant the use of chaotic models for genes evolution. 
\end{abstract}

\textit{keywords:}
Genes evolution models; Inversions; Mathematical topology; Devaney's chaos.

\section{Introduction}

Codons are not uniformly distributed into the genome.
Over time mutations have introduced some variations in their apparition frequency.
It can be attractive to study the genetic patterns (blocs of more than one nucleotide: dinucleotides, trinucleotides...) that appear and disappear depending on mutation parameters.
Mathematical models allow the prediction of such an evolution, in such a way that statistical values observed into current genomes can be recovered.

A first model for genomes evolution has been proposed in 1969 by Thomas Jukes and Charles Cantor \cite{Jukes69}. 
This first attempt has been followed up by Motoo Kimura \cite{Kimura80}, Joseph Felsenstein \cite{Felsenstein1980}, Masami Hasegawa, Hirohisa Kishino, and Taka-Aki Yano \cite{Hasegawa1985} respectively.
The differences between these models are in the number of parameters they use, but all of these models manipulate constant parameters.
However, they are rudimentary as they only allow to study nucleotides evolution, not genetic patterns mutations.

From 1990 to 1994, Didier Arquès and Christian Michel have proposed models based on the RY purine/pyrimidine alphabet \cite{Arques1990,Arques1990a,Arques1992,Arques1993a,Arques1993b,Arques1994}. 
These models have been abandoned by their own authors in favor of models over the $\{A,C,G,T\}$ alphabet.
More precisely, Didier Arquès, Jean-Paul Fallot, and Christian Michel have proposed in \cite{Arques1998} a first evolutionary model on the $\{A,C,G,T\}$ alphabet that is based on trinucleotides.
As for the nucleotides based models, this new approach has taken into account only constants parameters.

In 2004, Jacques M. Bahi and Christian Michel have published a novel research work in which the model of 1998 has been improved by replacing constants parameters by new parameters dependent on time \cite{Bahi2004}.
By this way, it has been possible to simulate a genes evolution that is non-linear.
However, the following years, these researchers have been returned to models embedding constant parameters, probably due to the fact that the model of 2004 lead to poor results: only one of the twelve studied cases allows to recover values that are close to reality.
For instance, in 2006, Gabriel Frey and Christian Michel have proposed a model that uses 6 constant parameters \cite{Frey2006a}, whereas in 2007, Christian Michel has constructed a model with 9 constants parameters that generalize those of 1998 and 2006 \cite{Michel2007c}.
Finally, Jacques M. Bahi and Christian Michel have recently introduced in \cite{Bahi2008, Bahi2008a}, a last model with 3 constant parameters, but whose evolution matrix evolves over time.
In other words, trinucleotides that have to mutate are not fixed, but they are randomly picked among a subset of potentially mutable trinucleotides.

This model, called ``chaotic model'' CM, allow a good recovery of various statistical properties detected into the genome.
Furthermore, this model match well with the hypothesis of some primitive genes that have mutated over time.
In this paper, we wonder why the CM model gives good results. Obviously, to suppose that not all of the trinucleotides have to mutate at each time is reasonable as, for instance, the stop codons have very small mutation probabilities.
However, such a biological claim is not sufficient to explain all the consequences of the success of the CM model to simulate the dynamics of mutations into genomes. 
Indeed, we have recently established that such a model based on chaotic iterations is indeed really chaotic, as it is defined in the mathematical theory of chaos.
Before this proof, the term ``chaotic'' in these discrete iterations was only an adjective, having apparently no obvious relation with the well-established Devaney's characterization of an unpredictable behavior.

In this paper, we wonder why a model having a chaotic dynamics gives, in a certain way, better results than the standard model, to predict the evolution of genomes through mutations.
We will show that an important mutation mechanism, namely the inversion, has a chaotic dynamic over time. 
Consequences of this proof, for biology and evolution models, will finally be discussed.

The remainder of this research work is organized as follows.

\section{Discrete time chaotic evolution model}

\subsection{Chaotic Iterations}
\label{sec:chaotic iterations}

Let us consider  a \emph{system} with a finite  number $\mathsf{N} \in
\mathds{N}^*$ of elements  (or \emph{cells}), so that each  cell has a
Boolean  \emph{state}. A  sequence of  length $\mathsf{N}$  of Boolean
states of  the cells  corresponds to a  particular \emph{state  of the
system}. A sequence which  elements belong to $\llbracket 1;\mathsf{N}
\rrbracket $ is called a \emph{strategy}. The set of all strategies is
denoted by $\mathbb{S}.$

\begin{definition}
\label{Def:chaotic iterations}
The      set       $\mathds{B}$      denoting      $\{0,1\}$,      let
$f:\mathds{B}^{\mathsf{N}}\longrightarrow  \mathds{B}^{\mathsf{N}}$ be
a  function  and  $S\in  \mathbb{S}$  be  a  strategy.  The  so-called
\emph{chaotic      iterations}     are     defined      by     $x^0\in
\mathds{B}^{\mathsf{N}}$ and
$$     
\forall    n\in     \mathds{N}^{\ast     },    \forall     i\in
\llbracket1;\mathsf{N}\rrbracket ,x_i^n=\left\{
\begin{array}{ll}
  x_i^{n-1} &  \text{ if  }S^n\neq i \\  
  \left(f(x^{n-1})\right)_{S^n} & \text{ if }S^n=i.
\end{array}\right.
$$
\end{definition}

In other words, at the $n^{th}$ iteration, only the $S^{n}-$th cell is
\textquotedblleft  iterated\textquotedblright .  Note  that in  a more
general  formulation,  $S^n$  can   be  a  subset  of  components  and
$\left(f(x^{n-1})\right)_{S^{n}}$      can     be      replaced     by
$\left(f(x^{k})\right)_{S^{n}}$, where  $k<n$, describing for example,
delays  transmission~\cite{Robert1986}.

\subsection{Genes mutations shown as chaotic iterations}

When considering the model of 2007 with 9 constant parameters that generalize models of 1998 and 2006, all of the trinucleotides have to mutate at each time.
These models do not take into account the low mutability of the stop codons.
Additionally, they do not allow to apply mutation strategies on certain given codons, while the other codons do not mutate.
This is why a new model with 3 constant parameters has been proposed in \cite{Bahi2008,Bahi2008a}.
In this model, the set of trinucleotides is divided into two subsets at each time $t$: the first one is constituted by trinucleotides that can possibly mutate at time $t$, whereas in the second one trinucleotides cannot mutate at the considered time. 
The trinucleotides that mutate at time $t$ are randomly picked following a uniform distribution.
Consequently, the size and the constitution of the subset of mutable trinucleotides change at each time $t$. This subset is denoted by $J(t)$, and this new model has been called ``chaotic model'' by the authors of \cite{Bahi2008,Bahi2008a}, as opposed to the former ``standard model'' of 1998.

In the chaotic model, non-mutable trinucleotides cannot have been obtained by the mutation of other trinucleotides. 
Consequently, their probability of occurrence is constant, so their derivation is null.
Conversely, mutation parameters of the mutable trinucleotides are those of the model of 1998: $p$, $q$, and $r$ with $p+q+r=1$, for each of the three sites of nucleotides.

The new model is thus defined by the following way:
$$
\left\{
\begin{array}{ll}
P_i'(t) = 0  & \text{if } i \notin J(t)\\
P_i'(t) = \displaystyle{\sum_{j=1}^{64}} (A^{(t)}-I)_{ji}P_j(t) & \text{if } i \in J(t)
\end{array}
\right.
$$

Obviously, this new model is a generalization of the one of 1998, as if we suppose that, $\forall t, A^{(t)} = A$, and if $\forall t, J(t)$ is the set of all the trinucleotides, then the system above can be resumed to its second line, which is exactly the model of 1998.

As the number of mutable trinucleotides changes over time, the mutation matrix is not constant, which leads to the fact that the resolution method used in the standard model cannot be applied here.
To solve the system, authors of \cite{Bahi2008,Bahi2008a} have considered discrete times small enough to be sure that the mutation matrix do not change between two instants $t_i$ and $t_{i+1}$.

Let $A^{(k)}$ be the (constant) mutation matrix during the time interval $[t_{k-1}, t_k]$. 
To be able to compute $P_i'(t_{k-1})$, authors of \cite{Bahi2008,Bahi2008a} have used the Euler method, to obtain:

$$\dfrac{d~P_i(t_{k-1})}{dt} = \dfrac{P_i(t_k) - P_i(t_{k-1})}{h},$$

\noindent where $h=t_k-t_{k-1}$ is supposed small and constant. 
By putting this formula into the previous system, these authors have finally obtained:
$$
\left\{
\begin{array}{ll}
P_i(t_k) = P_i(t_{k-1})  & \text{if } i \notin J(t),\\
P_i(t_k) = h \displaystyle{\sum_{j=1}^{64}} (A^{(k)}-I)_{ji}P_j(t_{k-1}) + P_i(t_{k-1}) & \text{if } i \in J(t).
\end{array}
\right.
$$

This model has been called the ``discrete time chaotic evolution model CM'' in \cite{Bahi2008,Bahi2008a}.
This discrete version of the continuous chaotic one is, indeed, a gene evolution model that use chaotic iterations of Definition \ref{Def:chaotic iterations}.
To understand the interest of this discrete time chaotic evolution model, we must firstly recall the discovery by Michel \emph{et al.} of a $C^3-$code and its properties.

\subsection{Relevance of the CM model}

A computation of the frequency of each trinucleotide in the 3 frames of genes, in a large gene population (protein coding region) of both eukariotes and prokaryotes, has established in 1996 that the distribution of trinucleotides in these frames is not uniform.
Such a surprising result has led to the definition of 3 subsets of trinucleotides, denoted by $X_0$, $X_1$, and $X_2$.
$X_0$, $X_1$, and $X_2$ are respectively constituted by 20 trinucleotides. 
They are linked by the following  permutation property: $X_1 = \{\mathcal{P}(t), t \in X_0 \}$, $X_2 = \{\mathcal{P}(t), t \in X_1 \}$, where for all trinucleotide $t=n_0n_1n_2$, $\mathcal{P}(t)=n_2n_1n_0$.
More details about the research context and the properties of these sets ($C^3$ code, rarity, largest window length, higher frequency of ``misplaced'' trinucleotides, flexibility) can be found in \cite{Bahi2008,Bahi2008a}.
Among other things, it has been proven that $X_0$ occurs with the highest probability (48.8\%) in genes (reading frames 0), whereas $X_1$ and $X_2$ occur mainly in the frames 1 and 2, respectively. 
In other words, $X_0$ is not pure (its probability is less than 1): it is mixed with $X_1$ and $X_2$ in genes. 

Such a property can be explained as follows: random mutations have introduced noise during evolution, leading to a decreased probability of $X_0$ \cite{Bahi2008,Bahi2008a}.
Moreover, codes $X_1$ and $X_2$ are not symmetric in genes, i.e., $P(X_1) < P(X_2)$ (the probability difference is 4,8\%).
This is totally unexpected: the complementarity property should lead to the same probabilities for $X_1$ and $X_2$, even when considering noise during evolution.

The standard and chaotic models (with particular strategies for the stop codons) can explain both the decreased probability of the code $X_0$ and the asymmetry between the codes $X_1$ and $X_2$ in genes.
These standard and chaotic models construct ``primitive'' genes, i.e., genes before random substitutions, with trinucleotides of the circular code $X_0$. 
These models are able to find the frequency orders of the three codes $X_0$, $X_1$, and $X_2$ in genes. 
In particular, the chaotic model called ``$CM_{TAA}$'' with low mutability of the stop codon TAA, matches the probability discrepancy between the circular codes $X_1$ and $X_2$ observed in real genes.
Its ability to match is better than the standard model SM and the other chaotic models.

In the following section, we will propose some reasons explaining why some chaotic models match with a good accuracy frequency orders of the three codes, when considering the circular code $X_0$ as constituting the ``primitive'' genes.
More precisely, we will show that some genes evolution mechanisms are chaotic according to Devaney, thus explaining why chaotic models fit such evolution.

\section{The CM model is a truly chaotic one}

First of all, let us recall that the term  ``chaotic'', in  the name of  these iterations, has \emph{a priori} no link with the mathematical theory of chaos, recalled below.

\subsection{Devaney's chaotic dynamical systems}
\label{sec:Devaney}

Consider  a topological  space $(\mathcal{X},\tau)$  and  a continuous
function $f$ on $\mathcal{X}$.

\begin{definition}
  $f$ is said  to be \emph{topologically transitive} if,  for any pair
  of open sets $U,V \subset \mathcal{X}$, there exists $k>0$ such that
  $f^k(U) \cap V \neq \varnothing$.
\end{definition}

\begin{definition}
  An element  (a point) $x$  is a \emph{periodic element}  (point) for
  $f$  of period  $n\in \mathds{N}^*,$  if $f^{n}(x)=x$.
\end{definition}

\begin{definition}
  $f$ is said to be \emph{regular} on $(\mathcal{X}, \tau)$ if the set
  of periodic points for $f$  is dense in $\mathcal{X}$: for any point
  $x$ in $\mathcal{X}$, any neighborhood  of $x$ contains at least one
  periodic point.
\end{definition}

\begin{definition}
  $f$ is said  to be \emph{chaotic} on $(\mathcal{X},\tau)$  if $f$ is
  regular and topologically transitive.
\end{definition}

The   chaos   property  is   strongly   linked   to   the  notion   of
``sensitivity'', defined on a metric space $(\mathcal{X},d)$ by:

\begin{definition}
  \label{sensitivity} $f$ has \emph{sensitive dependence on initial conditions}
  if there  exists $\delta >0$  such that, for any  $x\in \mathcal{X}$
  and  any  neighborhood  $V$  of  $x$,  there  exists  $y\in  V$  and
  $n\geqslant 0$  such that $d\left(f^{n}(x),  f^{n}(y)\right) >\delta
  $.

  \noindent $\delta$ is called the \emph{constant of sensitivity} of $f$.
\end{definition}

Indeed, Banks  \emph{et al.}  have proven in~\cite{Banks92}  that when
$f$ is chaotic and $(\mathcal{X}, d)$  is a metric space, then $f$ has
the  property  of sensitive  dependence  on  initial conditions  (this
property was formerly  an element of the definition  of chaos). To sum
up, quoting Devaney in~\cite{Devaney}, a chaotic dynamical system ``is
unpredictable   because  of  the   sensitive  dependence   on  initial
conditions. It cannot be broken down or simplified into two subsystems
which do not interact because  of topological transitivity. And in the
midst  of this  random behavior,  we nevertheless  have an  element of
regularity''.  Fundamentally   different  behaviors  are  consequently
possible and occur in an unpredictable way.

\subsection{Chaotic iterations and Devaney's chaos}
\label{sec:topological}

In this section we give outline proofs of the properties establishing the fact that the CM model is truly chaotic, as it is defined in the Devaney's theory. 
The complete theoretical framework is detailed in~\cite{bg10:ij}.

Denote by $\Delta $ the \emph{discrete Boolean metric},
$\Delta(x,y)=0\Leftrightarrow x=y.$ Given a function $f$, define the
function: $F_{f}: \llbracket1;\mathsf{N}\rrbracket\times
\mathds{B}^{\mathsf{N}} \longrightarrow \mathds{B}^{\mathsf{N}}
$ such that $F_{f}(k,E)=\left( E_{j}.\Delta (k,j)+f(E)_{k}.\overline{\Delta
(k,j)}\right)_{j\in \llbracket1;\mathsf{N}\rrbracket}$.

Let us consider the phase space
$\mathcal{X}=\llbracket1;\mathsf{N}\rrbracket^{\mathds{N}}\times
\mathds{B}^{\mathsf{N}}$ and the map $G_{f}\left( S,E\right) =\left( \sigma (S),F_{f}(i(S),E)\right)
$, where $\sigma$ is defined by $\sigma :(S^{n})_{n\in \mathds{N}}\in \mathbb{S}\rightarrow (S^{n+1})_{n\in \mathds{N}}\in \mathbb{S}$,
and $i$ is the map $i:(S^{n})_{n\in \mathds{N}}\in \mathbb{S}\rightarrow S^{0}\in
\llbracket1;\mathsf{N}\rrbracket$. So the chaotic iterations can be described by the following iterations:
$$X^{0}\in \mathcal{X}\text{ and }X^{k+1}=G_{f}(X^{k}).$$

We have defined in~\cite{bg10:ij} a new distance $d$ between two points $(S,E),(\check{S},\check{E} )\in \mathcal{X}$
by
$d((S,E);(\check{S},\check{E}))=d_{e}(E,\check{E})+d_{s}(S,\check{S}),$
where:
\begin{itemize}
\item
$\displaystyle{d_{e}(E,\check{E})}=\displaystyle{\sum_{k=1}^{\mathsf{N}}\Delta
(E_{k},\check{E}_{k})} \in \llbracket 0 ; \mathsf{N} \rrbracket$,
\item
$\displaystyle{d_{s}(S,\check{S})}=\displaystyle{\dfrac{9}{\mathsf{N}}\sum_{k=1}^{\infty
}\dfrac{|S^{k}-\check{S}^{k}|}{10^{k}}} \in [0 ; 1].$
\end{itemize}

It is then proven that,

\begin{proposition}
\label{Prop:continuite} $G_f$ is a continuous function on $(\mathcal{X},d)$.
\end{proposition}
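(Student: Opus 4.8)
The plan is to prove continuity directly from the $\varepsilon$--$\eta$ definition, exploiting the fact that the metric $d$ splits into a discrete, integer-valued part $d_e$ and a real-valued part $d_s$, which can be handled separately. Fix an arbitrary point $X=(S,E)\in\mathcal{X}$ and a tolerance $\varepsilon>0$; I will exhibit $\eta>0$ such that $d(X,\check{X})<\eta$ forces $d(G_f(X),G_f(\check{X}))<\varepsilon$ for every $\check{X}=(\check{S},\check{E})$. Writing out the image distance gives $d(G_f(X),G_f(\check{X}))=d_e\big(F_f(i(S),E),F_f(i(\check{S}),\check{E})\big)+d_s(\sigma(S),\sigma(\check{S}))$, so it suffices to control the two summands independently.

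First I would treat the shift term, which is the genuinely metric (real-valued) part. Since $(\sigma(S))^k=S^{k+1}$, a direct reindexing of the series shows $d_s(\sigma(S),\sigma(\check{S}))\leqslant 10\,d_s(S,\check{S})$, i.e. $\sigma$ is Lipschitz for $d_s$ with constant $10$. Hence keeping $d_s(S,\check{S})$ below $\varepsilon/10$ already makes this contribution smaller than $\varepsilon$; this step is routine.

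The crux is the term $d_e\big(F_f(i(S),E),F_f(i(\check{S}),\check{E})\big)$. The key observation is that $d_e$ is integer-valued, so this quantity is either $0$ or $\geqslant 1$; to push it below $\varepsilon$ (once $\varepsilon$ is small) I do not merely approximate, I must force the two Boolean vectors to be \emph{exactly equal}. A sufficient condition for $F_f(i(S),E)=F_f(i(\check{S}),\check{E})$ is that $E=\check{E}$ and that the two strategies select the same cell, $i(S)=i(\check{S})$. I obtain $E=\check{E}$ by taking $\eta\leqslant 1$: then $d_e(E,\check{E})\leqslant d(X,\check{X})<1$ and, being a nonnegative integer, $d_e(E,\check{E})=0$. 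I obtain $i(S)=i(\check{S})$ by taking $\eta$ below the weight $\tfrac{9}{10\mathsf{N}}$ carried by the first strategy term in $d_s$: if the selected components differed they would contribute at least $\tfrac{9}{10\mathsf{N}}$ to $d_s(S,\check{S})\leqslant d(X,\check{X})$, a contradiction. With both equalities in force, $F_f(i(S),E)=F_f(i(\check{S}),\check{E})$ and the $d_e$ part of the image distance is exactly $0$.

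Finally I would assemble the constants: choosing $\eta=\min\!\big(1,\tfrac{9}{10\mathsf{N}},\tfrac{\varepsilon}{10}\big)$, the hypothesis $d(X,\check{X})<\eta$ simultaneously annihilates the $d_e$ part and bounds the shift part by $10\eta\leqslant\varepsilon$, yielding $d(G_f(X),G_f(\check{X}))<\varepsilon$. The main obstacle is conceptual rather than computational: the product metric mixes a discrete component with a contracting-shift component, so continuity of the $F_f$-coordinate cannot be proved by ``making things small'', but only by recognizing that an integer distance below $1$ is zero and then arranging both $E$ and the active strategy index to coincide. Identifying the correct threshold on $d_s$ that forces $i(S)=i(\check{S})$ is the delicate point; everything else is bookkeeping.
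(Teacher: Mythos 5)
Your proof is correct and follows essentially the same route as the one the paper relies on (deferred to~\cite{bg10:ij}, and mirrored in this paper's continuity proof for the inversion operator $\mathfrak{I}$): the integer-valued component $d_e$ is used to force \emph{exact} equality $E=\check{E}$ and $i(S)=i(\check{S})$, while the shift term is controlled by the factor-$10$ tail estimate on $d_s$. The only difference is presentational—you package it as an explicit $\varepsilon$--$\eta$ argument with $\eta=\min\left(1,\tfrac{9}{10\mathsf{N}},\tfrac{\varepsilon}{10}\right)$, where the source argues along convergent sequences.
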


In the metric space $(\mathcal{X},d)$,
the vectorial negation $f_{0} :\  \mathbb{B}^N  \longrightarrow  \mathbb{B}^N $, $(b_1,\cdots,b_\mathsf{N})  \longmapsto (\overline{b_1},\cdots,\overline{b_\mathsf{N}})$ satisfies the three conditions for Devaney's
chaos: regularity, transitivity, and sensitivity~\cite{bg10:ij}. So,
\begin{proposition}
$G_{f_0}$ is a chaotic map on $(\mathcal{X},d)$ according to Devaney.
\end{proposition}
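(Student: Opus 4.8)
The plan is to verify the two defining conditions of Devaney's chaos for $G_{f_0}$ directly on $(\mathcal{X},d)$, namely topological transitivity and regularity; sensitive dependence on initial conditions then comes for free from the theorem of Banks \emph{et al.}~\cite{Banks92} quoted above, since $(\mathcal{X},d)$ is metric and $G_{f_0}$ is continuous by Proposition~\ref{Prop:continuite}. Before either condition, I would set up the dictionary between the metric $d$ and the topology of $\mathcal{X}$. Because $d_e$ takes only integer values in $\llbracket 0;\mathsf{N}\rrbracket$, two points at distance strictly less than $1$ must share the same Boolean component $E\in\mathds{B}^{\mathsf{N}}$; and because $d_s$ is the base-$10$ weighted sum above, $d_s(S,\check S)$ is small exactly when $S$ and $\check S$ agree on a long initial segment. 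Hence a ball of small radius about $(S,E)$ is the set of all $(\tilde S,E)$ whose strategy coincides with $S$ on a prefix of length $n_0$, with $n_0\to\infty$ as the radius tends to $0$. Everything then rests on one elementary mechanism: for the vectorial negation, $F_{f_0}(k,E)$ is exactly $E$ with its $k$-th bit flipped, so by choosing the successive terms of the strategy one can steer the Boolean component from any state to any other in at most $\mathsf{N}$ iterations.

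For transitivity, given nonempty open sets $U,V$, I would pick $(S,E)\in U$ and $(\check S,\check E)\in V$ together with prefix lengths $n_1,n_2$ witnessing balls inside $U$ and $V$. Then I construct a single strategy $\tilde S$ in three blocks: its first $n_1$ terms reproduce $S$, so that $(\tilde S,E)\in U$; the next $m\leq\mathsf{N}$ terms are the indices of the bits in which the state reached after those $n_1$ iterations differs from $\check E$, so that after $k=n_1+m$ steps the Boolean component equals $\check E$; and from position $k+1$ onward $\tilde S$ copies $\check S$. Applying $G_{f_0}^{k}$ shifts the strategy by $k$ through $\sigma^{k}$, so the image $G_{f_0}^{k}(\tilde S,E)$ has Boolean part $\check E$ and a strategy agreeing with $\check S$ on its first $n_2$ terms, placing it in $V$ and yielding $G_{f_0}^{k}(U)\cap V\neq\varnothing$.

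Regularity is obtained from the same flipping idea made periodic. Given $(S,E)$ and $\varepsilon>0$, I fix a prefix length $n_0$ large enough that agreement on the first $n_0$ terms keeps a point within $\varepsilon$. Letting the iterations $S^1,\dots,S^{n_0}$ carry $E$ to a state $E''$, and letting $j_1,\dots,j_m$ (with $m\leq\mathsf{N}$) be the flips returning $E''$ to $E$, I take $\bar S$ to be the infinite periodic repetition of the finite block $S^1,\dots,S^{n_0},j_1,\dots,j_m$ of length $T=n_0+m$. Then $(\bar S,E)$ lies within $\varepsilon$ of $(S,E)$ and satisfies $G_{f_0}^{T}(\bar S,E)=(\bar S,E)$, since the Boolean component returns to $E$ while $\sigma^{T}(\bar S)=\bar S$ by $T$-periodicity; hence periodic points are dense.

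The main obstacle is not the constructions themselves, which are mechanical once the flipping mechanism is in hand, but the bookkeeping in the metric-to-topology dictionary: one must pin down precisely how small a radius (equivalently, how long a common prefix) is needed to stay inside $U$ and $V$, and verify that the shift $\sigma^{k}$ aligns the copied block of $\check S$ with the prefix that controls membership in $V$. Once transitivity and regularity are established this way, $G_{f_0}$ satisfies the Definition of a chaotic map, which concludes the proof.
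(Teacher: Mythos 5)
Your proof is correct, but there is an asymmetry to point out: for this proposition the paper itself contains no argument at all. It simply asserts that the vectorial negation $f_0$ satisfies regularity, transitivity, and sensitivity, and defers the whole proof to the framework of \cite{bg10:ij}. What you have written is, in substance, a reconstruction of the proof given in that reference: the same dictionary between the metric $d$ and cylinder-type neighborhoods (integrality of $d_e$ forcing equal Boolean components, decay of $d_s$ forcing agreement of strategies on long prefixes), the same key observation that $F_{f_0}(k,E)$ is $E$ with its $k$-th bit negated, transitivity by concatenating a prefix of $S$, a correcting block of at most $\mathsf{N}$ flips, and a copy of $\check{S}$, and regularity by periodizing such a block. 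So relative to the paper your route is not different so much as complete where the paper is a citation; relative to the cited framework it is essentially the standard argument. Two further remarks. First, your transitivity construction actually yields the stronger property that the orbit \emph{exactly reaches} a prescribed point of $V$ (the analogue of what the paper later calls strong transitivity for the inversion operator $\mathfrak{I}$), which is more than Devaney requires --- a feature worth stating explicitly. Second, since the paper's definition of chaos is regularity plus transitivity, your appeal to Banks \emph{et al.}~\cite{Banks92} for sensitivity is a legitimate bonus rather than a needed step, and it matches how the paper itself presents sensitivity. The only points demanding care are bookkeeping ones you already flag: the off-by-one conventions (the map uses $i(S)=S^0$ while the paper's $d_s$ sums from $k=1$, an infelicity of the source rather than of your proof), the requirement $k>0$ in the transitivity definition (handled by taking the prefix length $n_1\geqslant 1$, which covers the degenerate case where the correcting block is empty), and the verification that $\sigma^{k}$ aligns the copied block of $\check{S}$ with the controlling prefix, which your three-block construction does correctly. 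No genuine gap.
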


Thus the model that gives the best results, in a certain way, to the problem of genes evolution prediction, is a chaotic model.
We will give in the next sections a result concerning inversions that can possibly explain this fact, at least partially.

\section{How to Formalize Inversions}

\subsection{The inversion operator}

Let $\mathcal{N} = \{A,T,C,G\}$ be the set of nucleotides and $\mathsf{N} \in \mathds{N}^*$.
A chromosome with $\mathsf{N}$ nucleotides is any element of $\mathcal{N}^\mathsf{N}$. 
Let $\mathcal{C}$ be the set of all chromosomes of size $\mathcal{N}$.
For each $C,C' \in \mathcal{C}^\mathsf{N}$, the chromosome $\mathcal{C}$ is said to be changeable into $\mathcal{C}'$ if and only if there is a permutation mapping $\mathcal{C}$ into $\mathcal{C'}$. We denote it by $\mathcal{C} \approx \mathcal{C'}$.
In a mathematical point of view, $\approx$ is a relation of equivalency. In a biological point of view, the class of equivalency $\dot{\mathcal{C}}$ of $\mathcal{C}$ corresponds to all of the possible and conceivable reordering of the chromosome $\mathcal{C}$ over time. 
By reordering, we mean a simple change of the order of the nucleotides into $\mathcal{C}$.

We will focus on the evolution of a chromosome $\mathcal{C}^0$ of $\dot{\mathcal{C}}$ over time, when we suppose that  intrachromosomic inversions can occur. 
These \emph{inversions} have the form:

\begin{flushleft}
$(n_0, \hdots, n_{i-1}, \underline{n_i, n_{i+1}\hdots, n_{j-1}, n_j}, n_{j+1}, \hdots, n_\mathsf{N}) \longrightarrow$
\end{flushleft}
\begin{flushright}
$(n_0, \hdots, n_{i-1}, \underline{n_j, n_{j-1} \hdots, n_{i+1}, n_i,} n_{j+1}, \hdots, n_\mathsf{N}).$
\end{flushright}

The sequence of inversions corresponds to the sequence of segments $\left(\llbracket a^i ; b^i \rrbracket\right)_{i \in \mathds{N}}$, where $\llbracket a^i ; b^i \rrbracket$ represents the nucleotides segment that mutates at time $i$: the nucleotides from $n_{a^i}$ to $n_{b^i}$ are inverted.

Let $\mathcal{S}_{\mathsf{N}} = \left( \llbracket 1 ; \mathsf{N} \rrbracket \times \llbracket 1 ; \mathsf{N} \rrbracket \right)^\mathds{N}$ be the set of all the possible evolutions by inversion over time, $\mathcal{C}$ be a chromosome with $\mathsf{N}$ nucleotides, and $\mathcal{X}(\mathcal{C}) = \dot{\mathcal{C}}\times \mathcal{S}$.

We define the \emph{global inversion} by:

$$
\begin{array}{cccc}
i: & \dot{\mathcal{C}} & \longrightarrow & \dot{\mathcal{C}} \\
   & (n_1, \hdots, n_\mathsf{N}) & \longmapsto & (n_\mathsf{N}, \hdots, n_1).
   \end{array}
$$

In other words, for a chromosome $\mathcal{C}$, $i(\mathcal{C})$ is the chromosome in which the first nucleotide becomes the last one, etc.
Let us notice that, as the DNA strain is always read in the $5'\rightarrow 3'$ direction, then $i(C) \neq C$.

Let us now define the \emph{partial inversion} function, as follows:

$$
\begin{array}{cccc}
f: & \dot{\mathcal{C}}\times \llbracket 1;\mathsf{N}\rrbracket^2 & \longrightarrow & \dot{\mathcal{C}} \\
   & \left((n_1, \hdots, n_\mathsf{N}), (a,b)\right) & \longmapsto & (n_1', \hdots, n_\mathsf{N}'),
   \end{array}
$$
\noindent with:
$$
n_k' = \left\{ 
\begin{array}{ll}
n_k & \text{if } k \notin \llbracket a ; b \rrbracket,\\
\left(\sigma^{\mathsf{N}-a-b+1}\circ i\right) (n_1, \hdots, n_\mathsf{N})_k & \text{else,}
\end{array}
\right.
$$

\noindent where $\sigma$ is the nucleotide circular shift:

$$
\begin{array}{cccc}
\sigma: & \dot{\mathcal{C}} & \longrightarrow & \dot{\mathcal{C}} \\
   & (n_1, \hdots, n_\mathsf{N}) & \longmapsto & (n_2, \hdots, n_\mathsf{N}, n_1).
   \end{array}
$$

So $f(C,(a,b))$ is the chromosome corresponding to the inversion of the segment $\llbracket a ; b \rrbracket$ into the chromosome $\mathcal{C}$.
Furthermore, for each $(i,j) \in \llbracket 1, \mathsf{N} \rrbracket^2$, we define $f_{(i,j)}(C) = f\left(C,(i,j)\right)$.

\begin{remark}
$f$ can be rewritten as 

$$
\begin{array}{cccc}
f: & \dot{\mathcal{C}}\times \llbracket 1; \mathsf{N}\rrbracket^2 & \longrightarrow & \llbracket 1; \mathsf{N}\rrbracket^2 \\
   & \left((n_1, \hdots, n_\mathsf{N}),(a,b)\right) & \longmapsto & \left(n_k\left(1-\mathcal{I}_{\llbracket a;b\rrbracket}(k)\right)+\left(\sigma^{\mathsf{N}-a-b+1}\circ i\right)(n_1, \hdots, n_{\mathsf{N}})_k \mathcal{I}_{\llbracket a;b\rrbracket}(k)\right)_{k=1, \hdots, \mathsf{N}}.
   \end{array}
$$
where $\mathcal{I}_X$ is the indicator function of the set X:
$$
\mathcal{I}_X(x) = \left\{ 
\begin{array}{ll}
1 & \text{if } x \in X\\
0 & \text{else.}
\end{array}
\right.$$
\end{remark}

The \emph{inversion operator} $\mathfrak{I}$ is finally defined, for a given family of equivalent chromosomes $\dot{\mathcal{C}}$ having $\mathsf{N}$ nucleotides, by:

$$
\begin{array}{cccc}
\mathfrak{I}: & \mathcal{X}\left(\mathcal{C}\right) & \longrightarrow & \mathcal{X}\left(\mathcal{C}\right)\\
& \left( (n_1, \hdots, n_\mathsf{N}) ; \left(S^0, S^1, \hdots \right) \right) & \longmapsto &  \left( f\left((n_1, \hdots, n_\mathsf{N}),S^0\right) ; \left(S^1, S^2, \hdots \right) \right),
\end{array}
$$
that is, $\mathfrak{I}(C,S) = \left(f(C,S^0) ; \Sigma(S) \right)$, where $\Sigma \left( (S^n)_{n \in \mathds{N}}\right) = (S^{n+1})_{n \in \mathds{N}}$.

\subsection{A metric for chromosomes}

We can define a distance $d$ on $ \mathcal{X}\left(\mathcal{C}\right)$ by:
$\forall A=(C^A,S^A), B=(C^B,S^B) \in \dot{\mathcal{C}}\times \mathcal{S}_{\mathsf{N}},$
$$ d(A,B) = d_C(C^A,C^B) + d_S(S^A,S^B),$$

\noindent where:

\begin{itemize}
\item $d_C \left( (n_1, \hdots, n_\mathsf{N});(n_1', \hdots, n_\mathsf{N}') \right) = \displaystyle{\sum_{i=1}^\mathsf{N}} \delta(n_i,n_i')$, with $\delta(n,n') = 0$ if $n=n'$, and $\delta(n,n') = 1$ else.
\item $d_S(S,\check{S}) = \displaystyle{\dfrac{9}{\mathsf{N}}\sum_{i=1}^\infty \left( \dfrac{1}{10^i}\sum_{j=1}^\mathsf{N} \delta\left(g(S^i)_j; g(\check{S}^i)_j\right)\right)}$, with: $$g(a,b) = (1, \hdots, a-1, b,b-1, \hdots a+1,a,b+1, \hdots, \mathsf{N}).$$
\end{itemize}

\begin{proposition}
$d$ is a distance on $\mathcal{X}\left(\mathcal{C}\right)$.
\end{proposition}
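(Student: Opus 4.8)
The plan is to prove the axioms component by component: since $d = d_C + d_S$ splits as a sum of one function depending only on the chromosome part and one depending only on the strategy part, it suffices to show that $d_C$ is a distance on $\dot{\mathcal{C}}$ and that $d_S$ is a distance on $\mathcal{S}_{\mathsf{N}}$, because the sum of two distances on the two factors of a product is automatically a distance on the product. Before anything else I would check that $d_S$ is well defined, i.e. that its defining series converges: each inner sum $\sum_{j=1}^{\mathsf{N}} \delta(g(S^i)_j; g(\check{S}^i)_j)$ is bounded by $\mathsf{N}$, so the whole expression is bounded by $\frac{9}{\mathsf{N}} \mathsf{N} \sum_{i=1}^{\infty} 10^{-i} = 1$, which also recovers the stated range $d_S \in [0;1]$ and guarantees absolute convergence.

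For $d_C$ the verification is routine. The map $\delta$ is the discrete metric on $\mathcal{N}$, hence is nonnegative, symmetric, vanishes exactly on the diagonal, and satisfies the triangle inequality. Since $d_C$ is the finite sum over the $\mathsf{N}$ coordinates of $\delta(n_i, n_i')$ (that is, the Hamming distance on $\mathcal{N}^{\mathsf{N}}$), all four metric axioms are inherited coordinatewise; in particular $d_C(C^A, C^B) = 0$ forces $n_i = n_i'$ for every $i$, i.e. $C^A = C^B$.

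For $d_S$, nonnegativity is immediate and symmetry follows from the symmetry of $\delta$. The triangle inequality is obtained termwise: for each index $i$ and each coordinate $j$ one has $\delta(g(S^i)_j; g(\check{S}^i)_j) \le \delta(g(S^i)_j; g(\tilde{S}^i)_j) + \delta(g(\tilde{S}^i)_j; g(\check{S}^i)_j)$, and summing against the strictly positive weights $\frac{9}{\mathsf{N}} 10^{-i}$ preserves the inequality; the convergence established above legitimizes adding these series term by term.

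The delicate point — and the step I expect to be the main obstacle — is the separation axiom for $d_S$. Since all weights are positive and all terms are nonnegative, $d_S(S, \check{S}) = 0$ forces $g(S^i) = g(\check{S}^i)$ for every $i$. To conclude $S = \check{S}$ one then needs the encoding $g : \llbracket 1; \mathsf{N}\rrbracket^2 \longrightarrow \llbracket 1; \mathsf{N}\rrbracket^{\mathsf{N}}$ to be injective, so that $g(S^i) = g(\check{S}^i)$ implies $S^i = \check{S}^i$. I would therefore prove injectivity of $g$ directly: from the sequence $g(a,b)$ one reads off $a$ as the first position where it departs from the identity $(1, \hdots, \mathsf{N})$ and $b$ as the value sitting there, recovering the pair $(a,b)$ uniquely. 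Care is required for the degenerate pairs with $a = b$, for which $g$ returns the identity sequence for \emph{every} such pair and is hence not injective as it stands; resolving this — either by restricting admissible strategies to $a < b$ or by identifying all trivial inversions — is exactly the place where the argument must be made rigorous, and it is the crux of the proof.
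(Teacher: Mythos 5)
Your decomposition is exactly the paper's: it too proves that $d = d_C + d_S$ is a distance by checking the axioms summand by summand, treating $d_C$ as the Hamming metric on $\mathcal{N}^{\mathsf{N}}$ with everything inherited coordinatewise from $\delta$. But your proposal is strictly more careful on the two points the paper skips. First, you verify convergence of the series defining $d_S$, which the paper takes for granted. Second, and more importantly, the separation axiom for $d_S$: the paper's own proof records only that $d_S(S,S')=d_S(S',S)$, that $d_S(S,S)=0$, and that the triangle inequality is ``inherited from $\delta$'' --- it never shows $d_S(S,\check{S})=0 \Rightarrow S=\check{S}$, which is precisely the axiom you isolate as the crux. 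Your reduction of that axiom to injectivity of $g$ is correct, and so is your counterexample: for every $a$, $g(a,a)$ is the identity sequence $(1,\hdots,\mathsf{N})$, so two strategies differing only at a step where both prescribe degenerate pairs, say $(1,1)$ versus $(2,2)$, are distinct points of $\mathcal{S}_{\mathsf{N}} = \left(\llbracket 1;\mathsf{N}\rrbracket^2\right)^{\mathds{N}}$ at $d_S$-distance $0$; pairs with $a>b$ cause a similar difficulty, since the formula for $g$ presupposes $a \leqslant b$. So on the space as defined, $d_S$ --- and hence $d$ --- is only a pseudometric, and the proposition as stated needs the repair you sketch: either restrict admissible pairs to $a < b$ (harmless for the dynamics, since degenerate segments leave the chromosome fixed), or quotient $\mathcal{S}_{\mathsf{N}}$ by the relation identifying $S$ and $\check{S}$ whenever $g(S^i)=g(\check{S}^i)$ for all $i$. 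The one thing your proposal still owes is a commitment: you flag the two fixes but do not carry one out, so as written your argument establishes the proposition only on the restricted or quotiented space. That residual step is small, and the gap you found is in the paper's definitions and proof, not in your reasoning.
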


\begin{proof}
We will show that $d$ is the sum of two distances.
\begin{enumerate}
\item Let us firstly demonstrate that $d_C$ is a distance:
\begin{itemize}
\item $d_C(C,C') = 0 \Rightarrow \forall i, \delta(n_i,n_i')=0$. So, $\forall i, n_i=n_i'$, and thus $C=C'$.
\item $d_C \left( C,C' \right) = \displaystyle{\sum_{i=1}^\mathsf{N}} \delta(n_i,n_i')= \displaystyle{\sum_{i=1}^\mathsf{N}} \delta(n_i',n_i)=d_C(C',C).$
\item $\forall C,C',C'' \in \mathcal{X}\left(\mathcal{C}\right), d_C(C,C'') \leqslant d_C(C,C')+d_C(C',C'')$. Indeed, $\forall i \in \llbracket 1; \mathsf{N} \rrbracket, \delta (n_i,n_i'') \leqslant \delta (n_i,n_i') + \delta (n_i',n_i'')$, because:
\begin{itemize}
\item it is obvious if $\delta(n_i,n_i'') = 0$,
\item else, $\delta(n_i,n_i'') = 1$, which implies either $n_i \neq n_i'$ or $n_i' \neq n_i''$. And so, either $\delta(n_i,n_i') = 1$ or $\delta(n_i',n_i'') = 1$.
\end{itemize}
\end{itemize}
\item Let us now prove that $d_S$ is a distance too. Obviously, $d_S(S,S') = d_S(S',S)$ and $d_(S,S) = 0.$ Finally, the triangle inequality of $d_S$ is inherited from the triangle inequality of $\delta$.
\end{enumerate}
\end{proof}

\begin{proposition}
The inversion operator $\mathfrak{I}$ is a continuous function on $\left(\mathcal{X}\left(\mathcal{C}\right),d\right)$.
\end{proposition}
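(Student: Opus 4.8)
The plan is to exploit the product structure of both the map and the metric. Since $\mathfrak{I}(C,S) = \left(f(C,S^0);\Sigma(S)\right)$ and the distance splits as $d = d_C + d_S$, it suffices to show that each of the two coordinate maps $(C,S)\mapsto f(C,S^0)$ and $(C,S)\mapsto \Sigma(S)$ is continuous, in exact analogy with Proposition~\ref{Prop:continuite} for $G_f$. As $(\mathcal{X}(\mathcal{C}),d)$ is a metric space, I would use the sequential characterization of continuity: fix $A=(C,S)$, take an arbitrary sequence $A^n=(C^n,S^n)$ with $d(A^n,A)\to 0$, and prove $d(\mathfrak{I}(A^n),\mathfrak{I}(A))\to 0$. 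Because $d_C,d_S\geqslant 0$, the hypothesis at once yields the two separate convergences $d_C(C^n,C)\to 0$ and $d_S(S^n,S)\to 0$.

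The structural fact driving everything is that both pieces of $d$ are built from the integer-valued $\delta$. Thus $d_C(C^n,C)\in\llbracket 0;\mathsf{N}\rrbracket$, so $d_C(C^n,C)\to 0$ forces $C^n=C$ for all large $n$; no mere approximation survives. Similarly, each inner sum $\sum_{j=1}^{\mathsf{N}}\delta\!\left(g((S^n)^i)_j;g(S^i)_j\right)$ is a nonnegative integer, so a fixed time-index $i$ contributes to $d_S(S^n,S)$ a quantity that is either $0$ or at least $\tfrac{9}{\mathsf{N}}\cdot\tfrac{1}{10^i}$. Hence, for each fixed $i$, the inequality $d_S(S^n,S)<\tfrac{9}{\mathsf{N}\,10^{i}}$ forces $g((S^n)^i)=g(S^i)$, and since $d_S(S^n,S)\to 0$ this equality holds eventually. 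Applying this to the term $S^0$ that drives the inversion, I obtain $C^n=C$ and $g((S^n)^0)=g(S^0)$ for all large $n$; by the rewriting of $f$ given in the Remark, $f(C,(a,b))$ depends on $(a,b)$ only through the inversion pattern $g(a,b)$, so these two equalities give $f(C^n,(S^n)^0)=f(C,S^0)$, i.e. the $d_C$-part of $d(\mathfrak{I}(A^n),\mathfrak{I}(A))$ is eventually $0$.

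For the strategy coordinate I would establish the Lipschitz estimate $d_S(\Sigma(S^n),\Sigma(S))\leqslant 10\,d_S(S^n,S)$. Writing $H_m=\sum_{j=1}^{\mathsf{N}}\delta\!\left(g((S^n)^m)_j;g(S^m)_j\right)$ and using $(\Sigma(S))^i=S^{i+1}$, the reindexing $m=i+1$ gives $d_S(\Sigma(S^n),\Sigma(S))=\tfrac{9}{\mathsf{N}}\sum_{m\geqslant 2}\tfrac{1}{10^{m-1}}H_m=10\cdot\tfrac{9}{\mathsf{N}}\sum_{m\geqslant 2}\tfrac{1}{10^{m}}H_m\leqslant 10\,d_S(S^n,S)$, whence $d_S(\Sigma(S^n),\Sigma(S))\to 0$. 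Adding this to the chromosome estimate yields $d(\mathfrak{I}(A^n),\mathfrak{I}(A))\to 0$, which is the desired continuity; the shift is in fact globally Lipschitz, so it is the soft part of the argument.

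The genuinely delicate point is the interplay, in the chromosome coordinate, between the \emph{discreteness} of the target metric $d_C$ and the action of $f$: convergence alone is useless there, because I must produce the \emph{exact} equality $f(C^n,(S^n)^0)=f(C,S^0)$. This is why the discreteness extraction applied to the acting term $S^0$ is indispensable, and it hinges on $d_S$ genuinely weighing that term; I would check this indexing convention carefully, since if $d_S$ summed only over strictly positive time-indices it would ignore $S^0$ and the chromosome coordinate of $\mathfrak{I}$ could fail to be continuous. Once the metric is confirmed to control $S^0$, the remainder is routine geometric-series bookkeeping.
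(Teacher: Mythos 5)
Your proof is correct and follows essentially the same route as the paper's: sequential continuity, with the integer-valued character of $d_C$ and of the inner sums of $d_S$ forcing $C^n=C$ and $g\left((S^n)^0\right)=g\left(S^0\right)$ for all large $n$, together with the Lipschitz estimate $d_S(\Sigma(S^n),\Sigma(S))\leqslant 10\, d_S(S^n,S)$ for the shift coordinate. In fact your version quietly repairs three small slips in the paper: you use the correct extraction threshold $\frac{9}{\mathsf{N}10^{i}}$ where the paper invokes $d_S<10^{-1}$ (which only forces the first inner sum to vanish when $\mathsf{N}\leqslant 9$); you pass through the factorization $f(C,(a,b))_k=n_{g(a,b)_k}$ rather than inferring $(S^n)^0=S^0$ from $g$-equality (an inference that fails on the diagonal, since $g(a,a)=\mathrm{id}$ for every $a$, though harmlessly so for $f$); and you rightly flag that $d_S$ as written sums from $i=1$ and so formally never weighs $S^0$ --- an indexing inconsistency that the paper's own proof silently corrects by treating the weight-$\frac{1}{10}$ term as controlling $S^0$.
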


\begin{proof}
Let $(C^k,S^k) \rightarrow (C,S)$. Then $d_C(C^k,C) \rightarrow 0$ and $d_S(S^k,S) \rightarrow 0$.
\begin{itemize}
\item One the one hand, as $d_C(C^k,C) \rightarrow 0$ and due to the fact that $d_C$ is an integer metric, we have: $\exists k_0, k \geqslant k_0 \Rightarrow C^k=C$. Additionally, as $d_S(S^k,S) \rightarrow 0$, $\exists k_1 \in \mathds{N}, k \geqslant k_1: d_S(S^k,S) < 10^{-1}.$ So $$\displaystyle{\dfrac{1}{10} \sum_{j=1}^\mathsf{N} \delta \left( g\left((S^k)^0\right)_j;g\left(S^0\right)_j\right) = 0.}$$ In other words, $\forall j \in \llbracket 1; \mathsf{N} \rrbracket, g\left((S^k)^0\right)_j = g\left(S^0\right)_j$. And thus $(S^k)^0 = S^0$. Finally, $\forall k \geqslant max(k_0,k_1), f\left(C^k, (S^k)^0\right) = f\left(C^k, S^0\right) \Rightarrow$ $$lim_{k\rightarrow \infty} f\left(C^k, (S^k)^0\right) = lim_{k\rightarrow \infty} f\left(C^k, S^0\right).$$
\item On the other hand, 
$$
\begin{array}{ll}
0 \leqslant d_S(\Sigma(S^k),\Sigma(S)) & = \displaystyle{\dfrac{9}{\mathsf{N}} \sum_{i=1}^\infty \left( \dfrac{1}{10^i}  \sum_{j=1}^\mathsf{N} \delta\left(g\left(\left(\Sigma S^k\right)^i\right)_j;g\left(\left(\Sigma S\right)^i\right)_j\right)\right)}\\
 & = \displaystyle{\dfrac{90}{\mathsf{N}} \sum_{i=1}^\infty \left( \dfrac{1}{10^{i+1}}  \sum_{j=1}^\mathsf{N} \delta\left(g\left(\left(\Sigma S^k\right)^{i+1}\right)_j;g\left(\left(\Sigma S\right)^{i+1}\right)_j\right)\right)}\\
 & = \displaystyle{\dfrac{90}{\mathsf{N}} \sum_{i=2}^\infty \left( \dfrac{1}{10^{i}}  \sum_{j=1}^\mathsf{N} \delta\left(g\left(\left(\Sigma S^k\right)^{i}\right)_j;g\left(\left(\Sigma S\right)^{i}\right)_j\right)\right)}\\
 & \leqslant \displaystyle{\dfrac{90}{\mathsf{N}} \sum_{i=1}^\infty \left( \dfrac{1}{10^{i}}  \sum_{j=1}^\mathsf{N} \delta\left(g\left(\left(\Sigma S^k\right)^{i}\right)_j;g\left(\left(\Sigma S\right)^{i}\right)_j\right)\right)} \rightarrow 0 
\end{array}
$$
\end{itemize}
We can thus conclude to the continuity of $\mathfrak{I}$ on $\left(\mathcal{X}\left(\mathcal{C}\right),d\right)$.
\end{proof}

Let us now introduce two lemmas. Their proofs are obvious.

\begin{lemma}
Any transposition $(i,j)$ can be written as a composition of $f_{\llbracket i+1;j-1\rrbracket} \circ f_{\llbracket i;j\rrbracket}$.
\end{lemma}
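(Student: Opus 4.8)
The plan is to verify the identity by a direct coordinate computation, tracking where each nucleotide is sent by the two successive reversals. First I would make explicit the coordinate action of a single partial inversion. Starting from the definition of $f$ through $\sigma^{\mathsf{N}-a-b+1}\circ i$, one checks that for a chromosome $(n_1,\hdots,n_\mathsf{N})$ the map $f_{\llbracket a;b\rrbracket}$ leaves every position $k\notin\llbracket a;b\rrbracket$ fixed and sends position $k\in\llbracket a;b\rrbracket$ to the value $n_{a+b-k}$; in particular the two endpoints $a$ and $b$ are exchanged. This is exactly the segment reversal pictured in the display defining inversions, and it is the only fact about $f$ that the argument needs.

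Next I would compose the two reversals, recalling that $f_{\llbracket i+1;j-1\rrbracket}\circ f_{\llbracket i;j\rrbracket}$ means applying $f_{\llbracket i;j\rrbracket}$ first. Writing $C'=f_{\llbracket i;j\rrbracket}(C)$ and $C''=f_{\llbracket i+1;j-1\rrbracket}(C')$, the coordinate formula gives $n_k''=n_k'$ for $k\notin\llbracket i+1;j-1\rrbracket$ and $n_k''=n_{i+j-k}'$ for $k\in\llbracket i+1;j-1\rrbracket$. Four cases then settle the result: for $k\notin\llbracket i;j\rrbracket$ both reversals act trivially, so $n_k''=n_k$; for $k=i$ only the outer reversal acts, giving $n_i''=n_j$; for $k=j$ likewise $n_j''=n_i$; and for an interior index $k\in\llbracket i+1;j-1\rrbracket$ both reversals act and cancel, since $i+j-k$ again lies in $\llbracket i+1;j-1\rrbracket$ whence $n_{i+j-k}'=n_{i+j-(i+j-k)}=n_k$. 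Thus $C''$ agrees with $C$ everywhere except that the entries at positions $i$ and $j$ have been swapped, which is precisely the transposition $(i,j)$.

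The computation is routine; the only points requiring a little care are the index bookkeeping in the interior case, namely checking that the inner reversal maps $\llbracket i+1;j-1\rrbracket$ onto itself so that the two reversals are mutually inverse there, and the degenerate boundary situations. When $j=i+1$ the inner segment $\llbracket i+1;j-1\rrbracket$ is empty, so $f_{\llbracket i+1;j-1\rrbracket}$ reduces to the identity and the single outer reversal $f_{\llbracket i;i+1\rrbracket}$ already realizes the adjacent transposition; when $j=i$ the claim is vacuous. I expect no genuine obstacle beyond this bookkeeping, which is why the lemma may fairly be stated as obvious.
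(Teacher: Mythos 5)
Your computation is correct, and it is precisely the routine verification the paper has in mind: the paper offers no proof of this lemma, stating only that the proofs of both lemmas are ``obvious''. Your reduction of $f_{\llbracket a;b\rrbracket}$ to the coordinate formula $n_k \mapsto n_{a+b-k}$ on $\llbracket a;b\rrbracket$ (correctly derived from $\sigma^{\mathsf{N}-a-b+1}\circ i$), followed by the four-case check --- including noting that $i+j-k$ stays in $\llbracket i+1;j-1\rrbracket$ for interior $k$, and handling the empty inner segment when $j=i+1$ --- supplies exactly the argument the authors omitted.
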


\begin{lemma}
Any permutation can be written as a composition of $f_{\llbracket i;j\rrbracket}$.
\end{lemma}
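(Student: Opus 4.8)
The plan is to reduce the statement to the previous lemma by invoking the classical generation of the symmetric group by transpositions. Since a chromosome is a finite sequence of nucleotides and each $f_{\llbracket i;j\rrbracket}$ is a fixed permutation of the positions (the reversal of the segment $\llbracket i;j\rrbracket$), the maps $\{f_{\llbracket i;j\rrbracket}\}$ generate a subgroup of the symmetric group $\mathfrak{S}_{\mathsf{N}}$, and it suffices to show this subgroup is all of $\mathfrak{S}_{\mathsf{N}}$.

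First I would recall the elementary fact that every permutation of $\llbracket 1;\mathsf{N}\rrbracket$ can be written as a finite product of transpositions, i.e. the transpositions generate $\mathfrak{S}_{\mathsf{N}}$. (If one prefers an even more economical generating set, the adjacent transpositions $(i,i+1)$ already suffice, and each of these is literally a single partial inversion $f_{\llbracket i;i+1\rrbracket}$.) Then I would apply the preceding lemma, which rewrites each transposition $(i,j)$ as the composition $f_{\llbracket i+1;j-1\rrbracket}\circ f_{\llbracket i;j\rrbracket}$ of two partial inversions. Substituting this expression for every transposition occurring in a decomposition of an arbitrary permutation $\pi$ then exhibits $\pi$ as a finite composition of maps of the form $f_{\llbracket i;j\rrbracket}$, which is exactly the claim.

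There is no genuine obstacle here — this is why the authors call the proof obvious — but a careful write-up should dispatch the degenerate adjacent case $j=i+1$, where the inner segment $\llbracket i+1;i\rrbracket$ is empty, so $f_{\llbracket i+1;j-1\rrbracket}$ reduces to the identity and the transposition coincides with the single inversion $f_{\llbracket i;i+1\rrbracket}$. One should also keep in mind that the $f_{\llbracket i;j\rrbracket}$ are composed as permutations of the nucleotide positions, consistently with their definition as segment reversals, so that the subgroup they generate contains all transpositions and is therefore equal to $\mathfrak{S}_{\mathsf{N}}$.
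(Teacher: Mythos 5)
Your proof is correct and follows exactly the route the paper intends: the paper declares both lemmas obvious without writing proofs, but the ordering of the two lemmas signals that the second is to be deduced from the first via the standard fact that transpositions generate the symmetric group, which is precisely your argument. Your handling of the degenerate adjacent case $j=i+1$ is a welcome extra precision the paper omits.
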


\section{Chaos of DNA inversion}

\begin{proposition}
The inversion operator $\mathfrak{I}$ is strongly transitive on $\left(\mathcal{X}\left(\mathcal{C}\right),d\right)$.
\end{proposition}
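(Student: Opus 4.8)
The plan is to work directly from the definition of strong transitivity: $\mathfrak{I}$ is \emph{strongly transitive} on $(\mathcal{X}(\mathcal{C}),d)$ if, for every pair of points $A=(C^A,S^A)$ and $B=(C^B,S^B)$ in $\mathcal{X}(\mathcal{C})$ and every $\varepsilon>0$, there exist a point $Z\in\mathcal{X}(\mathcal{C})$ and an integer $n$ with $d(A,Z)\leqslant\varepsilon$ and $\mathfrak{I}^n(Z)=B$. I would exploit the product structure of $\mathfrak{I}$: applying it $n$ times transforms the chromosome by the composition of partial inversions $f_{Z^{n-1}}\circ\cdots\circ f_{Z^0}$ read off from the first $n$ strategy terms, while it shifts the strategy to $\Sigma^n(S^Z)$. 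Hence the target chromosome $C^B$ must be produced by a suitable finite word of partial inversions, and the target strategy $S^B$ must be made to appear as a tail of $S^Z$.

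First I would quantify the freedom left by the constraint $d(A,Z)\leqslant\varepsilon$. Since $d_C$ is integer valued, taking $\varepsilon<1$ forces $C^Z=C^A$; and since the contribution of the strategy terms to $d_S$ decays geometrically (like $10^{-i}$), there is an index $m$, depending only on $\varepsilon$, such that $Z$ lies in the ball as soon as $C^Z=C^A$ and the first $m$ terms of $S^Z$ coincide with those of $S^A$. I keep these $m$ terms equal to $S^A$; they evolve $C^A$ into a completely determined intermediate chromosome $C_m=f_{(S^A)^{m-1}}\circ\cdots\circ f_{(S^A)^0}(C^A)\in\dot{\mathcal{C}}$, over which I have no control.

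The crucial step is then to steer $C_m$ to the target $C^B$ using only the strategy terms of index $\geqslant m$, which are free. Because partial inversions map $\dot{\mathcal{C}}$ into itself, $C_m$ and $C^B$ belong to the same equivalence class, so there is a permutation of positions carrying $C_m$ onto $C^B$. By the lemma stating that every permutation is a composition of maps $f_{\llbracket i;j\rrbracket}$, this permutation equals $f_{(a_p,b_p)}\circ\cdots\circ f_{(a_1,b_1)}$ for some finite family of segments. I set $(S^Z)^{m+\ell-1}=(a_\ell,b_\ell)$ for $\ell=1,\dots,p$, so that after $n:=m+p$ iterations the chromosome component is exactly $C^B$, and I set the remaining tail $(S^Z)^{n+k}=(S^B)^k$ for every $k\geqslant 0$, so that $\Sigma^n(S^Z)=S^B$. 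Then $\mathfrak{I}^n(Z)=(C^B,S^B)=B$ while, by construction, $Z$ lies in the prescribed ball.

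I expect the main obstacle to be precisely this steering step: the first $m$ inversions are imposed by the ball and drive $C^A$ to an uncontrolled $C_m$, so one must argue that $C^B$ remains reachable from any such $C_m$. This is settled by the two facts that partial inversions act within, and transitively on, the equivalence class $\dot{\mathcal{C}}$ (through the lemma on permutations), so reachability does not depend on where $C_m$ happens to fall. A secondary, purely technical, point is to make the dependence of $m$ on $\varepsilon$ explicit via the geometric tail of $d_S$, and to check that $g$ is injective enough that freezing the images $g\bigl((S^Z)^i\bigr)$ indeed freezes the pairs $(S^Z)^i$ themselves.
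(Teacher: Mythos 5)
Your proof is correct and takes essentially the same route as the paper's: fix the chromosome and the first $k_0\approx -\log_{10}(\varepsilon)$ strategy terms of $A$, invoke the lemma that any permutation decomposes into partial inversions $f_{\llbracket i;j\rrbracket}$ to steer the resulting intermediate chromosome exactly onto $C^B$, and append $S^B$ as the tail of the strategy so that the image after finitely many iterations equals $B$ exactly. Your write-up is if anything slightly more careful than the paper's on the index bookkeeping and on flagging that the $g$-images must determine the strategy terms.
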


\begin{proof}
Let $A=(N^A,S^A)$ and $B=(N^B,S^B)$ two points of $\mathcal{X}\left(\mathcal{C}\right)$, and $\varepsilon >0$. We define $N = N^A$, and $\forall k \leqslant k_0 = - \left[ log_{10}(\varepsilon)\right], S^k = (S^A)^k$. Let $N' = \mathfrak{I}^{k_0}(N^A,S^A)_1$. There is a permutation that maps $N'$ on $N^B$, so there exists $S'=(S_1',S_2', \hdots, S_{k_1}') \in \llbracket 1, \mathsf{N} \rrbracket^{k_1}$ such that $\mathfrak{I}^{k_1}(N',S')_1 = N^B$.

Then the point:
\begin{itemize}
\item $N = N^A$,
\item $\forall k \leqslant k_0, S^k = (S^A)^k$,
\item $\forall k \in \llbracket 1, k_1 \rrbracket, S^{k_0+k} = S'^k$,
\item $\forall k \in \mathds{N}, S^{k_0+k_1+k+1} = (S^B)^k$.
\end{itemize}
is $\varepsilon-$close to $A$, and such that $\mathfrak{I}^{k_0+k_1}(N,S) = B.$
\end{proof}

\begin{proposition}
The inversion operator $\mathfrak{I}$ is regular on $\left(\mathcal{X}\left(\mathcal{C}\right),d\right)$.
\end{proposition}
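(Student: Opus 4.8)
The plan is to establish density of periodic points by mimicking the structure of the preceding strong transitivity proof: given an arbitrary point $A = (N^A, S^A)$ and $\varepsilon > 0$, I will build a periodic point $(N, S)$ lying within distance $\varepsilon$ of $A$. Since $d = d_C + d_S$ and $d_C$ is an integer metric, I would keep the chromosome unchanged, setting $N = N^A$ so that $d_C(N, N^A) = 0$; it then suffices to control $d_S(S, S^A)$. Because the weights in $d_S$ decay geometrically like $10^{-i}$ while each inner sum is bounded by $\mathsf{N}$, forcing $S$ to agree with $S^A$ on a sufficiently long initial segment, say $S^k = (S^A)^k$ for all $k \leqslant k_0 = -\left[\log_{10}(\varepsilon)\right]$, guarantees $d_S(S, S^A) \leqslant 10^{-k_0} < \varepsilon$ by the usual tail estimate.

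First I would let $N' = \mathfrak{I}^{k_0}(N^A, S^A)_1$ be the chromosome reached after these prescribed inversions. The crucial observation is that $N'$ and $N^A$ lie in the same equivalence class $\dot{\mathcal{C}}$, so some permutation maps $N'$ back onto $N^A$. By Lemma 2 this permutation can be written as a \emph{finite} composition of segment inversions $f_{\llbracket i;j\rrbracket}$; recording the corresponding pairs as strategy terms $S^{k_0+1}, \hdots, S^{k_0+k_1}$ yields a block of length $k_1$ that steers the chromosome from $N'$ exactly back to $N^A$.

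Setting $p = k_0 + k_1$, I would then define $S$ to be the $p$-periodic extension of this block, i.e. $S^{k+p} = S^{k}$ for every $k \in \mathds{N}$. With this choice $(N^A, S)$ is periodic of period $p$: on the one hand the first $p$ inversions compose to the identity on $N^A$ (the first $k_0$ carry $N^A$ to $N'$, the following $k_1$ carry $N'$ back to $N^A$), so the chromosome component of $\mathfrak{I}^p(N^A, S)$ is again $N^A$; on the other hand $\Sigma^p(S) = S$ precisely because $S$ is $p$-periodic. Hence $\mathfrak{I}^p(N^A, S) = (N^A, S)$, while by construction $d(A, (N^A, S)) = d_S(S^A, S) < \varepsilon$, which gives density.

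The routine parts here are the geometric tail estimate for $d_S$ and the verification that $p$-periodicity of the strategy yields $\Sigma^p(S) = S$. The one step carrying the real content is the invocation of Lemma 2: I must be sure that the permutation returning $N'$ to $N^A$ is realized by a \emph{finite} sequence of partial inversions, so that $k_1$, and therefore the period $p$, is finite. This is exactly what Lemma 2 supplies, and it is also the reason genuine periodicity (rather than mere recurrence) can be achieved simultaneously in both the chromosome and the strategy coordinates.
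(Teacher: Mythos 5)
Your proof is correct and follows essentially the same route as the paper's: keep $N = N^A$, copy the first $k_0 = -\left[\log_{10}(\varepsilon)\right]$ strategy terms of $S^A$ to force $d_S < \varepsilon$, append a finite block of inversions returning $\mathfrak{I}^{k_0}(A)_1$ to $N^A$, and repeat the whole block of length $k_0 + k_1$ periodically. The only difference is presentational: you explicitly invoke Lemma 2 to guarantee the return permutation decomposes into \emph{finitely} many partial inversions, a point the paper's proof uses silently, so if anything your write-up is the more complete one.
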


\begin{proof}
Let $A=(N^A,S^A) \in \mathcal{X}\left(\mathcal{C}\right)$ and $\varepsilon > 0$. We define $ k_0 = - \left[ log_{10}(\varepsilon)\right]$ and $\tilde{N} = \mathfrak{I}^{k_0}(A)_1$. A permutation can be found that maps $\tilde{N}$ into $N^A$, then there exists $\tilde{S}=(\tilde{S}^1, \hdots, \tilde{S}^{k_1})$ such that $\mathfrak{I}^{k_1}(N,\tilde{S})_1 = A$.

Then the point $(N,S)$ defined by:
\begin{itemize}
\item $N = N^A$,
\item $\forall k \leqslant k_0, S^k = (S^A)^k,$
\item $\forall k \in \llbracket 1; k_1 \rrbracket, S^{k_0+k} = \tilde{S}^k$,
\item $\forall k \in \mathds{N}, S^{k_0+k_1+k+1} = S^k$,
\end{itemize}
is a periodic point $\varepsilon-$close to $A$.
\end{proof}

As the inversion dynamic is both transitive and regular, we can thus conclude that,

\begin{theorem}
DNA inversion is chaotic, as it is defined in the Devaney's theory.
\end{theorem}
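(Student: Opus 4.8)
The plan is to assemble the theorem directly from the three structural properties already established for $\mathfrak{I}$, rather than to prove anything new: Devaney's definition, recalled in Section~\ref{sec:Devaney}, requires only regularity together with topological transitivity of a continuous map on its phase space $\left(\mathcal{X}\left(\mathcal{C}\right),d\right)$, and all three ingredients are now in hand.

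First I would deduce topological transitivity from the strong transitivity established just above. Given two nonempty open sets $U,V \subset \mathcal{X}\left(\mathcal{C}\right)$, I would pick $A \in U$ and $B \in V$, then choose $\varepsilon>0$ small enough that the open ball of radius $\varepsilon$ centered at $A$ is contained in $U$. Strong transitivity then furnishes a point $(N,S)$ that is $\varepsilon$-close to $A$, hence lies in $U$, together with an integer $k=k_0+k_1$ such that $\mathfrak{I}^{k}(N,S)=B \in V$. Consequently $\mathfrak{I}^{k}(U)\cap V \neq \varnothing$, which is exactly the defining condition of topological transitivity. Combining this with the regularity proposition, which asserts that the periodic points of $\mathfrak{I}$ are dense in $\mathcal{X}\left(\mathcal{C}\right)$, and with the continuity of $\mathfrak{I}$ established earlier, both conditions of Devaney's definition are satisfied, so $\mathfrak{I}$ is chaotic on $\left(\mathcal{X}\left(\mathcal{C}\right),d\right)$.

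To recover the full Devaney characterization, including the sensitive dependence on initial conditions that underlies unpredictability, I would finally invoke the theorem of Banks \emph{et al.}~\cite{Banks92}: since $\left(\mathcal{X}\left(\mathcal{C}\right),d\right)$ is a metric space and $\mathcal{X}\left(\mathcal{C}\right)=\dot{\mathcal{C}}\times \mathcal{S}_{\mathsf{N}}$ is infinite, a map that is regular and topologically transitive on such a space automatically possesses a constant of sensitivity. Thus the inversion dynamics is genuinely unpredictable in the sense of Definition~\ref{sensitivity}.

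At this stage there is essentially no obstacle left to overcome, as the substantive work resides in the transitivity and regularity propositions. The only two points that merit care are, first, confirming that the notion of strong transitivity used here (reaching an \emph{exact} target point $B$ from a point arbitrarily close to $A$) does imply the open-set formulation of topological transitivity, which the ball argument above settles; and second, checking that the hypotheses of~\cite{Banks92} are met, in particular that $\mathcal{X}\left(\mathcal{C}\right)$ is an infinite metric space, which holds because $\mathcal{S}_{\mathsf{N}}$ is an infinite product of nontrivial factors.
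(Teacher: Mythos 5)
Your proposal is correct and takes essentially the same route as the paper, whose entire proof of the theorem is the remark ``as the inversion dynamic is both transitive and regular, we can thus conclude,'' i.e., the same assembly of the continuity, strong transitivity, and regularity propositions under Devaney's definition. Your explicit ball argument showing that strong transitivity (exact hitting of $B$ from an $\varepsilon$-neighborhood of $A$) implies the open-set formulation of topological transitivity, and your appeal to Banks \emph{et al.}~\cite{Banks92} for sensitive dependence, merely make precise what the paper leaves implicit.
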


\section{Consequences}

\section{Conclusion}

\bibliographystyle{plain}
\bibliography{mabase}

\begin{thebibliography}{10}

\bibitem{Arques1994}
D.~G. Arques and C.~J. Michel.
\newblock Analytical expression of the purine/pyrimidine autocorrelation
  function after and before random mutations.
\newblock {\em Math Biosci}, 123(1):103--125, Sep 1994.

\bibitem{Arques1998}
D.~G. Arquès, J.~P. Fallot, and C.~J. Michel.
\newblock An evolutionary analytical model of a complementary circular code
  simulating the protein coding genes, the 5' and 3' regions.
\newblock {\em Bull Math Biol}, 60(1):163--194, Jan 1998.

\bibitem{Arques1990a}
D.~G. Arquès and C.~J. Michel.
\newblock A model of dna sequence evolution.
\newblock {\em Bull Math Biol}, 52(6):741--772, 1990.

\bibitem{Arques1990}
D.~G. Arquès and C.~J. Michel.
\newblock Periodicities in coding and noncoding regions of the genes.
\newblock {\em J Theor Biol}, 143(3):307--318, Apr 1990.

\bibitem{Arques1992}
D.~G. Arquès and C.~J. Michel.
\newblock A simulation of the genetic periodicities modulo 2 and 3 with
  processes of nucleotide insertions and deletions.
\newblock {\em J Theor Biol}, 156(1):113--127, May 1992.

\bibitem{Arques1993b}
D.~G. Arquès and C.~J. Michel.
\newblock Identification and simulation of new non-random statistical
  properties common to different eukaryotic gene subpopulations.
\newblock {\em Biochimie}, 75(5):399--407, 1993.

\bibitem{Arques1993a}
D.~G. Arquès, C.~J. Michel, and K.~Orieux.
\newblock Identification and simulation of new non-random statistical
  properties common to different populations of eukaryotic non-coding genes.
\newblock {\em J Theor Biol}, 161(3):329--342, Apr 1993.

\bibitem{Bahi2008}
Jacques~M. Bahi and Christophe Guyeux.
\newblock Chaotic iterations and topological chaos.
\newblock 2008.

\bibitem{Bahi2004}
Jacques~M Bahi and Christian~J Michel.
\newblock A stochastic gene evolution model with time dependent mutations.
\newblock {\em Bull Math Biol}, 66(4):763--778, Jul 2004.

\bibitem{Bahi2008a}
Jacques~M Bahi and Christian~J Michel.
\newblock A stochastic model of gene evolution with chaotic mutations.
\newblock {\em J Theor Biol}, 255(1):53--63, Nov 2008.

\bibitem{Banks92}
J.~Banks, J.~Brooks, G.~Cairns, and P.~Stacey.
\newblock On devaney's definition of chaos.
\newblock {\em Amer. Math. Monthly}, 99:332--334, 1992.

\bibitem{Devaney}
Robert~L. Devaney.
\newblock {\em An Introduction to Chaotic Dynamical Systems}.
\newblock Addison-Wesley, Redwood City, CA, 2nd edition, 1989.

\bibitem{Felsenstein1980}
J.~Felsenstein.
\newblock A view of population genetics.
\newblock {\em Science}, 208(4449):1253, Jun 1980.

\bibitem{Frey2006a}
Gabriel Frey and Christian~J Michel.
\newblock An analytical model of gene evolution with six mutation parameters:
  an application to archaeal circular codes.
\newblock {\em Comput Biol Chem}, 30(1):1--11, Feb 2006.

\bibitem{Hasegawa1985}
M.~Hasegawa, H.~Kishino, and T.~Yano.
\newblock Dating of the human-ape splitting by a molecular clock of
  mitochondrial dna.
\newblock {\em J Mol Evol}, 22(2):160--174, 1985.

\bibitem{Jukes69}
T.~H. Jukes and C.~R. Cantor.
\newblock {\em {Evolution of Protein Molecules}}.
\newblock Academy Press, 1969.

\bibitem{Kimura80}
Motoo Kimura.
\newblock A simple method for estimating evolutionary rates of base
  substitutions through comparative studies of nucleotide sequences.
\newblock {\em Journal of Molecular Evolution}, 16:111--120, 1980.
\newblock 10.1007/BF01731581.

\bibitem{Michel2007c}
Christian~J Michel.
\newblock An analytical model of gene evolution with 9 mutation parameters: an
  application to the amino acids coded by the common circular code.
\newblock {\em Bull Math Biol}, 69(2):677--698, Feb 2007.

\bibitem{Robert1986}
F.~Robert.
\newblock {\em Discrete Iterations: A Metric Study}, volume~6 of {\em Springer
  Series in Computational Mathematics}.
\newblock 1986.

\end{thebibliography}

\end{document}